\newtheorem{theorem}{Theorem}[section]
\newtheorem{lemma}{Lemma}[section]
\newtheorem{definition}{Definition}[section]
\newtheorem{assumption}{Assumption}[section]
\def\QED{\mbox{\rule[0pt]{1ex}{1ex}}}
\def\Q{\hspace*{\fill}~\QED\par\endtrivlist\unskip}
\title{A note on the stability of multiclass Markovian queueing networks}
\author{Sayee C. Kompalli and Ravi R. Mazumdar \\
Department of Electrical and Computer Engineering \\
University of Waterloo, Canada N2L 3G1. \\
Email: skompall@uwaterloo.ca, mazum@ece.uwaterloo.ca
}
\begin{document}
\maketitle

\begin{abstract}
In this paper we show that in a multiclass Markovian network with unit rate servers, the condition that the average load $\rho$ at every  server is less than unity  is indeed sufficient for the stability or positive recurrence for \emph{any} work conserving scheduling policy and \emph{class-independent} routing. We use a variation of the positive recurrence criterion for multidimensional discrete-time Markov chains over countable state spaces due to Rosberg (JAP, Vol.~17, No.~3, 1980) and a monotonicity argument to establish this assertion.
\end{abstract}

\section{Introduction and Overview} \label{sec:literature survey}
In this paper we show that the condition that the load $\rho = \sum_{i=1}^{J _m}\rho_i <1$  where $J_m$ denotes the number of classes that visit node $m$ at every work conserving unit rate server of a open multiclass Markovian network is sufficient to ensure the stability of the network with class-independent routing.
We prove this via a variation of the positive recurrence criterion for multidimensional discrete-time Markov chains over countable state spaces due to Rosberg \cite{Ros-JRN-JAP-1980}.

Much research  effort has focussed on studying the stability of multiclass queueing networks in the last two decades and the amount of the literature thus is understandably vast. Much of the research on this topic was fueled by the  example of a queueing network in a deterministic setting that Lu and Kumar studied  in  \cite{LuKum-JRN-ACTRAN}. This paper showed the existence of a scheduling policy under which a certain network becomes unstable even though the load on each work conserving server in the network is less than unity, which was counter intuitive given the well known fact that the load is less than unity is a sufficient and necessary condition for the stability of a single queue with multiple classes under \emph{any} work conserving policy. The appearence of the paper \cite{LuKum-JRN-ACTRAN} was followed by a number of papers \cite{KumSei-JRN-ACTRAN, KumMey-JRN-JSAC-1995, RybSto-JRN-PRI, Bra-AAP-1994, Sei-JRN-ACTRAN, Dai-AAP-1995, Sto-MPRF-1995, Bra-QUESTA-1996, FosRyb-MPRF-1996} that either established the positive assertion that multiclass networks are stable under specific settings and policies, or proved that under certain specific scheduling policies for which the usual traffic condition $\rho < 1$ at every work conserving server is \emph{not} sufficient to ensure the stability of the network.

Foss and Rybko \cite{FosRyb-MPRF-1996} proposed a condition, called \emph{Jackson-type} condition, under which they showed that $\rho < 1$ at every server is a sufficient condition when a FIFO scheduling policy used under renewal assumptions on arrival and service processes. The Jackson-type condition roughly means that the routing mechanism is ``almost independent'' of customer classes. In this paper we also assume that the \emph{routing is class independent} and we show that  when arrivals are Poisson and service times are exponential, the network is stable under the usual traffic conditions for \emph{any} work conserving scheduling policy (that could differentiate between classes). In the network models considered in \cite{LuKum-JRN-ACTRAN}, \cite{Sei-JRN-ACTRAN}, and \cite{Bra-AAP-1994}, it should be noted  that the customers follow a predetermined fixed route before exiting the network and hence the routing is not class independent.This is different from the situation considered by Kelly \cite{Kelly-BOOK} where it is shown that multiclass Markovian networks with \textit{class independent} routing are stable under FIFO and PS scheduling under the {\em usual} condition. This paper essentially extends those results to any work-conserving scheduling policy.


\section{The Stability Analysis} \label{sec:stability analysis}
A common technique to establish ergodicity of countable space Markov chains is through Foster-Lyapunov approach which consists of finding an appropriate Lyapunov function that satisfies Foster's criterion~\cite{Foster_1953}.  Though this approach has proved
to be very successful for one dimensional Markov chains, finding such a function proved to be very difficult in the case of multidimensional Markov chains especially networks. Rosberg~\cite{Ros-JRN-JAP-1980} extended Foster's criterion~\cite{Foster_1953} for $J\geq2$-dimensional Markov chains  by requiring existence of $J$ Lyapunov functions, one for each coordinate of the process. But the applicability of his criterion becomes limited when the state definition for the Markov chain does not have a \textit{fixed} dimension. For example, the well known FCFS scheduling policy requires a variable-length dimension for the state definition. Also, depending on the modeling assumptions, one may only require a fewer component Lyapunov functions than the dimensionality of the underlying state space.  Thus, we propose a variation of Rosberg's criterion that expands the scope of the applicability that we use to establish stability for the multiclass network model we consider.

Let $\mathcal{X}$ be a countable set of states over which the irreducible, aperiodic, and discrete-time Markov chain $\{X^{n}, n \geq 0\}$ takes its values. For any integer $k \geq 1$, define $\left\{p_{xy}^{k}, x, y \in
\mathcal{X} \right\}$ to be the $k$-step transition probability law of the
Markov chain $\{X^{n}, n \geq 0\}$. For any subset $\mathcal{B} \subseteq
\mathcal{X}$, we know that the $\lim_{k \rightarrow \infty}
p_{x\mathcal{B}}^{k} = \lim_{k \rightarrow \infty} \sum_{y \in \mathcal{B}}
p_{xy}^{k}  = \pi(\mathcal{B}) \geq 0$ exists and is independent of the initial
state $x$. For any nonnegative-valued function $V$ on $\mathcal{X}$, let us
define $\Delta^{k} V(x) \triangleq \sum_{y} p_{xy}^{k}V(y) - V(x)$ to be the
$k$-step drift of the function $V$ in state $x$. Let $c^{*} \geq 0$ denote the limit $\lim_{k \rightarrow \infty} \frac{\Delta^{k}V(x)}{k}$, and is independent of $x$. The following definition was originally proposed in \cite{Ros-JRN-JAP-1980}.

\begin{definition}
\label{def:uub}
The sequence $\left\{ \frac{\Delta^{k}V(x)}{k}, k \geq 1\right\}$ is said to be uniformly upper bounded (UUB) if, for any $\delta > 0$, there exists a positive integer $\mathsf{K}$ such that $\frac{\Delta^{k}V(x)}{k} < c^{*} + \delta$ for $k \geq \mathsf{K}$ and $x \in \mathcal{X}$.
\end{definition}

We now propose a variation of the Rosberg's positive criterion \cite{Ros-JRN-JAP-1980}. In the rest of the paper, we will use the same notation $\mathcal{X}$ to denote both the state space and its subsets. The distinction is made through the usage of subscripts, i.e., $\mathcal{X}_{j}$ denotes a subset.

\begin{theorem}[A variation of Rosberg's criterion ~\cite{Ros-JRN-JAP-1980}]
\label{th:My generalization of Rosberg's theorem}
Let $J \geq 2$ be an integer.
\begin{assumption}
\label{assumption 1}
There exists a collection $\mathcal{P} = \left\{\mathcal{P}_{1}, \mathcal{P}_{2}, \ldots, \mathcal{P}_{J}\right\}$  of partitions of the set $\mathcal{X}$  where $\mathcal{P}_{j} = \left\{\mathcal{X}_{j}, \mathcal{X}_{j}^{c}\right\}$, and
nonnegative-valued functions $\{V_{j}(x), x \in \mathcal{X}\}$ for $1 \leq j \leq J$ such that the drift $\Delta V_{j}(x)$ of the function $V_{j}$ in the state $x$ has the following form:
\begin{eqnarray}
\label{one-step drift definition}
\Delta V_{j}(x) &\leq& \left\{
\begin{array}{ll}
\eta_{j} & \mbox{for}\;x \in \mathcal{X} \\
-\epsilon_{j} & \mbox{for}\;x \in \mathcal{X}_{j}^{c},
\end{array}
\right.
\end{eqnarray}
where $\epsilon_{j} > 0$ and $\eta_{j} \geq 0$.
\end{assumption}

\begin{assumption}
\label{assumption 2}
There exist partitions $\left\{\mathcal{A}_{j, k}, \mathcal{A}_{j, k}^{c}\right\}$, $k \geq 1$
and $1 \leq j \leq J$, of the set $\mathcal{X}$ with the following two properties:
\begin{itemize}
\item[(i)]   $p_{xy}^{l} = 0$, $0 \leq l \leq k-1$, for $x \in \mathcal{A}_{j, k}^{c}$ and $y \in \mathcal{X}_{j}$
\item[(ii)]  $\cap_{j} \mathcal{A}_{j, k}$ is a finite set
\end{itemize}
\end{assumption}

\begin{assumption}
\label{assumption 3}
For $1 \leq j \leq J$, the sequence $\left\{ \frac{\Delta^{k}V_{j}(x)}{k}, k \geq 1\right\}$
is uniformly upper bounded (UUB).
\end{assumption}

Then the Markov chain $\{X_{n}, n \geq 1\}$ is positive recurrent. \Q
\end{theorem}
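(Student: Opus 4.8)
The plan is to argue by contradiction, exploiting the standard dichotomy for an irreducible aperiodic chain: either it is positive recurrent, in which case $\pi(y)=\lim_{k}p^{k}_{xy}>0$, or it is null recurrent/transient, in which case $\lim_{k}p^{k}_{xy}=0$ for every $y$ and hence $\pi(F)=0$ for every \emph{finite} set $F$. I would assume the latter and derive a contradiction with Assumption~\ref{assumption 2}(ii), i.e.\ with the finiteness of $\cap_{j}\mathcal{A}_{j,k}$. The workhorse throughout is the telescoping identity $\Delta^{k}V_{j}(x)=\sum_{l=0}^{k-1}\sum_{z}p^{l}_{xz}\Delta V_{j}(z)$, which turns the one-step drift bounds of Assumption~\ref{assumption 1} into $k$-step bounds.

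First I would record two elementary consequences. Since each $V_{j}\geq 0$, we have $\Delta^{k}V_{j}(x)\geq -V_{j}(x)$, so dividing by $k$ and letting $k\to\infty$ gives $c^{*}_{j}\geq 0$. Next, property~(i) of Assumption~\ref{assumption 2} says that from any $x\in\mathcal{A}_{j,k}^{c}$ the chain cannot reach $\mathcal{X}_{j}$ within the first $k$ steps; thus $X^{l}\in\mathcal{X}_{j}^{c}$ for $0\leq l\leq k-1$, and Assumption~\ref{assumption 1} together with the telescoping identity yields $\Delta^{k}V_{j}(x)\leq -k\epsilon_{j}$, that is, $\Delta^{k}V_{j}(x)/k\leq-\epsilon_{j}$ on $\mathcal{A}_{j,k}^{c}$. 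Note that the partition index $k$ and the drift order $k$ are deliberately the same, which is exactly what makes this bound available.

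The crux is to combine this with the UUB hypothesis (Assumption~\ref{assumption 3}), $\Delta^{k}V_{j}(z)/k<c^{*}_{j}+\delta$ for \emph{all} $z$ once $k\geq\mathsf{K}(\delta)$, by introducing a second, slow time average. Writing $g(n)=\sum_{z}p^{n}_{xz}V_{j}(z)$, one has $g(n+k)-g(n)=\sum_{z}p^{n}_{xz}\Delta^{k}V_{j}(z)$; splitting this sum over $\mathcal{A}_{j,k}$ and $\mathcal{A}_{j,k}^{c}$ and inserting the UUB bound on the former and the $-k\epsilon_{j}$ bound on the latter gives
\[
g(n+k)-g(n)\leq k(c^{*}_{j}+\delta)\,p^{n}_{x\mathcal{A}_{j,k}}-k\epsilon_{j}\,p^{n}_{x\mathcal{A}_{j,k}^{c}}.
\]
Averaging over $n=0,\dots,N-1$, dividing by $k$, and letting $N\to\infty$, the left-hand side tends to $c^{*}_{j}$ (since $g(n)/n=\Delta^{n}V_{j}(x)/n+V_{j}(x)/n\to c^{*}_{j}$), while the right-hand side tends to $(c^{*}_{j}+\delta+\epsilon_{j})\pi(\mathcal{A}_{j,k})-\epsilon_{j}$ (since $p^{n}_{x\mathcal{A}_{j,k}}\to\pi(\mathcal{A}_{j,k})$ and Cesàro limits agree with ordinary ones). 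Rearranging yields $\pi(\mathcal{A}_{j,k})\geq(c^{*}_{j}+\epsilon_{j})/(c^{*}_{j}+\epsilon_{j}+\delta)$ for every $k\geq\mathsf{K}(\delta)$, hence $\liminf_{k}\pi(\mathcal{A}_{j,k})\geq(c^{*}_{j}+\epsilon_{j})/(c^{*}_{j}+\epsilon_{j}+\delta)$; letting $\delta\downarrow 0$ and using $\epsilon_{j}>0$ forces $\pi(\mathcal{A}_{j,k})\to 1$ as $k\to\infty$, for each fixed $j$.

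To finish, because there are only $J$ coordinates I would choose $k$ so large that $\pi(\mathcal{A}_{j,k})\geq 1-\tfrac{1}{2J}$ simultaneously for all $j$; finite subadditivity of $\pi$ over $\mathcal{A}_{1,k}^{c},\dots,\mathcal{A}_{J,k}^{c}$ then gives $\pi(\cap_{j}\mathcal{A}_{j,k})=1-\pi(\cup_{j}\mathcal{A}_{j,k}^{c})\geq\tfrac12$, whereas $\cap_{j}\mathcal{A}_{j,k}$ is finite by Assumption~\ref{assumption 2}(ii), so $\pi(\cap_{j}\mathcal{A}_{j,k})=0$ under the non-positive-recurrent assumption --- the desired contradiction. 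I expect the hard part to be precisely the previous paragraph: because $\pi$ is only a finitely subadditive set function rather than a countably additive measure when the chain is not positive recurrent, one cannot naively intersect the per-coordinate estimates, and it is the two-timescale averaging coupled with UUB that sharpens them to $\pi(\mathcal{A}_{j,k})\to 1$. Getting the order of limits right --- averaging in $n$ first, and only then tying $\delta$ to $k$ through the UUB threshold $\mathsf{K}(\delta)$ --- is the delicate bookkeeping on which the argument rests.
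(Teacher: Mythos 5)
Your proof is correct, but it takes a genuinely different route from the paper, which follows Rosberg's Theorem 2 almost verbatim. The paper argues directly rather than by contradiction: it fixes $\delta>0$, takes $\mathsf{K}$ from the UUB hypothesis, introduces nonnegative error functions $g_j^{\mathsf{K}}(x) = \mathsf{K}(c_j^*+\delta) - \Delta^{\mathsf{K}}V_j(x)$ which satisfy $g_j^{\mathsf{K}}(x)\geq \mathsf{K}(c_j^*+\delta+\epsilon_j)$ on $\mathcal{A}_{j,\mathsf{K}}^{c}$, telescopes $\Delta^{n\mathsf{K}}V_j(x)$ along the $\mathsf{K}$-skeleton to show that the time average of $\mathsf{E}_x\bigl(g_j^{\mathsf{K}}(X^{l\mathsf{K}})\bigr)$ is asymptotically $\mathsf{K}\delta$, and then applies a Markov-type inequality (Rosberg's Lemma 1) to $\max_j g_j^{\mathsf{K}}$ to conclude that the Cesàro density of visits to the finite set $\mathcal{A}_0 \subseteq \cap_j\mathcal{A}_{j,\mathsf{K}}$ is positive, which gives positive recurrence for an irreducible aperiodic chain. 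You replace this with: (a) a contradiction frame via the dichotomy ($\pi(F)=0$ for every finite $F$ when the chain is not positive recurrent), which is precisely the contrapositive of the paper's closing step; (b) a sliding-window average of $\mathsf{E}_x V_j(X^n)$ over \emph{all} times $n$ rather than the $\mathsf{K}$-skeleton, yielding the per-coordinate estimate $\pi(\mathcal{A}_{j,k}) \geq (c_j^*+\epsilon_j)/(c_j^*+\epsilon_j+\delta)$ and hence $\pi(\mathcal{A}_{j,k})\to 1$; and (c) a union bound over the $J$ coordinates in place of the Markov inequality on the max of the error functions. What your version buys: it is self-contained (you reprove Rosberg's Lemma 2 inline via the telescoping identity and avoid Lemma 1 entirely), and $\pi(\mathcal{A}_{j,k})\to 1$ is a cleaner intermediate target than the error-function bookkeeping; what the paper's version buys is that it needs no hypothesis on existence of limiting probabilities of infinite sets, since it only ever manipulates Cesàro averages and a finite set. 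That is the one caveat in your write-up: you invoke $p^n_{x\mathcal{A}_{j,k}} \to \pi(\mathcal{A}_{j,k})$ for the possibly infinite sets $\mathcal{A}_{j,k}$, which the paper does assert in its setup, but which is not automatic under your contradiction hypothesis (for null recurrent or transient chains such limits can fail to exist). Your argument survives without it: since the termwise inequality holds for each $n$ and your left-hand side has an honest limit $c_j^*$, you get $\liminf_{N} \frac{1}{N}\sum_{n=0}^{N-1} p^n_{x\mathcal{A}_{j,k}} \geq (c_j^*+\epsilon_j)/(c_j^*+\epsilon_j+\delta)$, and the union bound and finite-set contradiction go through with Cesàro $\liminf$/$\limsup$ everywhere in place of $\pi$.
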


Before we provide a proof of Theorem \ref{th:My generalization of Rosberg's theorem} we first discuss how our variation is different from the original theorem in \cite{Ros-JRN-JAP-1980}.

In \cite{Ros-JRN-JAP-1980} it is assumed that the the state space of the MC  model denoted by $\mathcal{X} =
\mathbb{Z}_{+}^{J}$, the $J$-dimensional non-negative integer space, for some integer $J \geq 2$ and an
\emph{equal number}  $J$  partitions denoted by $\left\{\mathcal{X}_{j} ,
\mathcal{X}_{j}^{c}\right\}$, $1 \leq j \leq J$, of the countable space
$\mathcal{X}$, and also the same number $J$ of Lyapunov functions
$\left\{V_{j}(x), x \in \mathcal{X}\right\}$, $1 \leq j \leq J$.  In the variation proposed in this paper, \textit{we do not require the countable space
$\mathcal{X}$ to have a fixed predetermined dimension}.  This corresponds to the
Assumption~\ref{assumption 1} of Theorem~\ref{th:My generalization of Rosberg's
theorem}.  Hence we are free to choose an appropriate number of Lyapunov
functions and the corresponding suitable partitions of the state space
$\mathcal{X}$.  We believe this generalization is useful  in many situations of interest when the state space is not of fixed predetermined dimension.

For instance, when the FIFO scheduling policy is implemented at a server  the state of each queue is characterized not just by the total number of customers present in the queue, but also by the class of the customers and the relative order of arrival of the customers of different classes. Thus the state description of the server is \emph{not} of fixed dimension but of variable dimension, and potentially can grow very large depending on the number of customers.



\begin{proof}
Since the proof of Theorem \ref{th:My generalization of Rosberg's theorem} is almost identical to the Proof of Theorem 2 in \cite{Ros-JRN-JAP-1980} we provide here only the main argument, omitting many intermediate results and supporting arguments which can be found in \cite{Ros-JRN-JAP-1980}.

Let us fix an arbitrary $\delta > 0$. From Assumption \ref{assumption 3} of Theorem \ref{th:My generalization of Rosberg's theorem}, it follows
that there exits a positive integer $\mathsf{K}$ such that $\frac{\Delta^{k}
V_{j}(x)}{k} \leq c_{j}^{*} + \delta$ for $k \geq \mathsf{K}$ and $1 \leq j
\leq J$. Let us pick one such $\mathsf{K}$, and then introduce the set of
functions $\left\{g_{j}^{\mathsf{K}}(x); x \in \mathcal{X}\right\}$ such that
the following holds:
\begin{eqnarray}
\label{eq:error function}
\Delta^{\mathsf{K}} V_{j}(x) &=& -g_{j}^{\mathsf{K}}(x) + \mathsf{K} \left(c_{j}^{*} + \delta \right)
\end{eqnarray}
Two observations on the functions $g_{j}^{\mathsf{K}}$ are in order: the first
and the obvious observation is that $g_{j}^{\mathsf{K}}(x) \geq 0$ for $x \in
\mathcal{X}$. Also, since $\Delta^{\mathsf{K}} V_{j}(x) \leq -\mathsf{K}
\epsilon_{j}$ for $x \in \mathcal{A}_{j,\mathsf{K}}^{c}$ (Lemma 2 of \cite{Ros-JRN-JAP-1980}), we have that
$g_{j}^{\mathsf{K}}(x) \geq \left(c_{j}^{*} + \delta + \epsilon_{j}\right)$ for
$x \in \mathcal{A}_{j,\mathsf{K}}^{c}$. Set $\epsilon = \min_{j} \epsilon_{j}$
and $\delta  = \min_{j} \delta_{j}$.  As a result, we have the obvious
deduction that $\max_{j} g_{j}^{\mathsf{K}}(x) \geq \min_{j} \mathsf{K} \left(
c^{*}_{j} + \delta + \epsilon_{j} \right) = \left( c^{*} + \delta + \epsilon
\right)$ for $x \in \cup_{j} \mathcal{A}_{j,\mathsf{K}}^{c}$. Hence $\max_{j}
g_{j}^{\mathsf{K}}(x) < \min_{j} \mathsf{K} \left( c^{*} + \delta + \epsilon
\right)$ implies that $x \in \cap_{j} \mathcal{A}_{j,\mathsf{K}}$. We should
note that $x \in \cap_{j} \mathcal{A}_{j,k}$ \emph{need not imply that}
$\max_{j} g_{j}^{\mathsf{K}}(x) < \mathsf{K} \left( c^{*} + \delta + \epsilon
\right)$.

Denote by $\mathsf{E}_{x}\left( g_{j}^{\mathsf{K}} (X^{n}) \right)$ the
expectation of $g_{j}^{\mathsf{K}} (X^{n})$ given that $X^{0} = x$ and by
$p_{x}(X^{n} \in A)$ the probability that $X^{n} \in A$ given that $X^{0} = x$.
Now

\begin{eqnarray*}
\frac{\Delta^{n\mathsf{K}} V_{j}(x)}{n} &=& \sum_{y \in \mathcal{X}} \frac{1}{n} \sum_{l=0}^{n-1} p_{xy}^{l\mathsf{K}} \Delta^{\mathsf{K}} V_{j}(y) \\
&\stackrel{(a)}{=}& \sum_{y \in \mathcal{X}} \frac{1}{n} \sum_{l=0}^{n-1} p_{xy}^{l\mathsf{K}} \left[-g^{\mathsf{K}}_{j}(y) + \mathsf{K}(c^{*}_{j} + \delta) \right] \\
&=& -\sum_{y \in \mathcal{X}} \frac{1}{n} \sum_{l=0}^{n-1} p_{xy}^{l\mathsf{K}} g^{\mathsf{K}}_{j}(y) + \mathsf{K}(c^{*}_{j} + \delta) \\
&=& -\frac{1}{n} \sum_{l=0}^{n-1} \mathsf{E}_{x} \left( g_{j}^{\mathsf{K}} \left( X^{l\mathsf{K}} \right) \right) + \mathsf{K}(c^{*}_{j} + \delta)
\end{eqnarray*}

where $(a)$ follows from~(\ref{eq:error function}).

Since $ \lim_{n \rightarrow \infty} \frac{\Delta^{n\mathsf{K}}
V_{j}(x)}{n\mathsf{K}} = c_{j}^{*}$, we have that $\frac{1}{n} \sum_{l=0}^{n-1}
\mathsf{E}_{x} \left( g_{j}^{\mathsf{K}} \left( X^{l\mathsf{K}} \right) \right)
=\mathsf{K} \delta$. Now

\begin{eqnarray*}
\liminf_{n \rightarrow \infty} \frac{1}{n} p_{x}
\left( \max_{j} g_{j}^{\mathsf{K}}(x) < \mathsf{K}\left( c^{*} + \delta +
\epsilon \right) \right) & \stackrel{(b)}{\geq} &
1 - \limsup_{n \rightarrow \infty} \frac{1}{n} \sum_{l=0}^{n-1} \sum_{j=1}^{J} \frac{\mathsf{E}_{x} \left( g_{j}^{\mathsf{K}} \left( X^{l\mathsf{K}} \right) \right)}{\mathsf{K}\left( c^{*} + \delta + \epsilon \right)}  \\
& \geq & 1 - \frac{1}{\mathsf{K}\left( c^{*} + \delta + \epsilon \right)} \times \\
&& \sum_{j=1}^{J}  \limsup_{n \rightarrow \infty} \frac{1}{n} \sum_{l=0}^{n-1} \mathsf{E}_{x} \left( g_{j}^{\mathsf{K}} \left( X^{l\mathsf{K}} \right) \right)  \\
 &=& 1 - \frac{J\mathsf{K} \delta}{\mathsf{K}\left( c^{*} + \delta + \epsilon \right)} \\
 &=& 1 - \frac{J\delta}{\left( c^{*} + \delta + \epsilon \right)}, \\
\end{eqnarray*}
where $(b)$ follows from Lemma 1 in \cite{Ros-JRN-JAP-1980}.

We note that there exists a $\delta_{0} > 0$ such that $1 - \frac{J\delta}{\left( c^{*} + \delta_{0} + \epsilon \right)} > 0$. Define the set
\begin{eqnarray*}
\mathcal{A}_{0} &=& \left\{ x \in \mathcal{X}: \max_{j} g_{j}^{\mathsf{K}}(x) < \mathsf{K}\left( c^{*} + \delta_{0} + \epsilon \right) \right\}
\end{eqnarray*}
We can observe that $\mathcal{A}_{0} \subseteq \cap_{j} \mathcal{A}_{j,k}$ is a finite set. Hence it follows that for the finite set $\mathcal{A}_{0}$,

\begin{eqnarray*}
\liminf_{n \rightarrow \infty} \frac{1}{n} \sum_{l=0}^{n-1} p_{x} \left( X^{l\mathsf{K}} \in \mathcal{A}_{0} \right) & > & 0
\end{eqnarray*}

Since the chain is assumed to be irreducible and aperiodic, it follows that the
Markov chain is positive recurrent.  \end{proof}

\section{Multiclass  Network Model} \label{sec:model}
Consider an open queueing network consisting of $J$ work conserving single-servers, each with an infinite capacity queue. Let $\alpha=1,2,\ldots $ denote the class of an arriving customer and customers of class $\alpha$ arrive at server $j, j=1,2,\ldots, J$ as a Poisson process with rate $\lambda_{\alpha,j}$ and a customer of class $\alpha$ at node $j$ requires a service time that  is exponentially distributed with rate $\mu_{\alpha,j}$. After completing service a customer of class $\alpha$ is routed to server $k$ with probability $r_{j,k}$ that is the \emph{same} for all classes.
It is assumed that the scheduling policy at any server is a stationary, \emph{work-conserving} policy.

One way of constructing this network is to consider a model in which customers from outside to the network arrive  in a Poisson process of rate $\lambda$. On arrival, the customer joins the queue attached to server $j$ as a class $\alpha$ customer with the probability $q_{\alpha, j}$ such that $\sum_{\alpha, j} q_{\alpha, j}=1$. Thus $\lambda_{\alpha,j}= \lambda q_{\alpha,j}$.
 Once assigned, a customer will retain its class  till it exits the network. Let $R=[r_{j, k}]$ denote the routing probability matrix and by construction $R^T$ is \emph{invertible} since by assumption the network is open.


Let $\left\{ \Lambda_{\alpha, j} \right\}$, where $\Lambda_{\alpha, j}$ is the equilibrium rate at which customers of the class $\alpha$ arrive to the $j$th server, be the solution of the following set of equations
\begin{eqnarray}
\label{eq:traffic equation}
\Lambda_{\alpha, j} &=& \lambda q_{\alpha, j} + \sum_{k=1}^{J} \Lambda_{\alpha, k} \; r_{k, j}
\end{eqnarray}
written for each class $\alpha$ and for each server $j$.

Let us define $\rho_{\alpha, j}$ as the load on  server $j$ from  customers of the class $\alpha$, then
\begin{eqnarray*}
\rho_{\alpha, j} &=&  \frac{\Lambda_{\alpha, j}}{\mu_{\alpha, j}}
\end{eqnarray*}
Let  $\Gamma_{\alpha, j}^{k}$ denote the expected number of times a customer of the class $\alpha$, who is presently in the queue $j$, visits the server $k$ before it departs the network. Then $\Gamma_{\alpha, j}^{k}$ satisfies the following equations:
\begin{equation}
\label{eq:conservation equations}
\left.
\begin{aligned}
\Gamma_{\alpha, j}^{k} &= \delta_{j, k} + \sum_{l=1}^{J} r_{j, l} \Gamma_{\alpha, l}^{k} \\
\sum_{\alpha} \sum_{k} \lambda q_{\alpha, k} \Gamma_{\alpha, k}^{j} &= \sum_{\alpha} \lambda_{\alpha, j}
\end{aligned}
\right\}
\end{equation}
where $\delta_{j, k}$ is the Kronecker Delta function. Let $\mathcal{X}$ denote the countable state space for the Markov chain $\{X_{n}, n \geq 1\}$, modeling the evolution of the numbers of customers of different classes in various queues. We assume that the state definition is detailed enough that in every state $x$ we can obtain the class and positional information of every customer in every queue. Let $\{p_{x, y}, x,y \in \mathcal{X}\}$ denote the transition probabilities of the Markov chain.

In the following we show that for a given multiclass network one can find a multiclass network but with \emph{single customer service rate} (i.e., $\mu_{\alpha, j} = \mu$) such that stability of the \emph{single customer service rate} network implies stability of the actual multiclass network. We will see that this reduction of multi service rates to single service rate coupled with the class-independent routing assumption helps us easily establish some monotonicity properties (Lemma \ref{lemma:verification of the UUB condition}) in the single service rate network.

Let $\mathcal{S}$ represent a network in which the Poisson arrival rate $\lambda$ and the service rates $\{\mu_{\alpha, j}\}$ are such that $\sum_{\alpha} \rho_{\alpha, j} < 1$ for each server $j$.
We now construct a network $\mathcal{S}^{\prime}$
which will be identical to the actual network $\mathcal{S}$ except for the Poisson  arrival rate and average service times of the customers. Specifically, we will choose a $\lambda^{\prime}$ and $\mu$ such that
$\rho_{\alpha, j} < \frac{\Lambda_{\alpha, j}^{\prime}}{\mu} < \rho_{\alpha, j}+\eta_{\alpha, j} < 1$ and $\sum_{\alpha} \rho^{\prime}_{\alpha, j} = \frac{1}{\mu} \sum_{\alpha} \Lambda_{\alpha, j}^{\prime} < 1$ hold for all classes $\alpha$ and all servers $j$  where $\eta_{\alpha, j}$ are arbitrarily small positive numbers. We note here that $\Lambda_{\alpha, j}^{\prime}$ solve the traffic equation (\ref{eq:traffic equation}) with $\lambda^{\prime}$ in place of $\lambda$.
Equivalently
\[
\max_{\alpha, j} \frac{\Lambda_{\alpha, j}^{\prime}}{\rho_{\alpha, j}+\eta_{\alpha, j}} < \mu < \min_{\alpha, j} \mu_{\alpha, j}, \quad \mbox{and} \quad \mu > \max_{j} \sum_{\alpha} \Lambda_{\alpha, j}^{\prime}
\]
should hold. This becomes feasible when $\lambda^{\prime}$ is chosen sufficiently small. Thus, we have constructed a network $\mathcal{S}^{\prime}$ in which customers arrive at a lower rate and spend longer times on an average in each of the servers they visit. Furthermore, the load due to class $\alpha$ at the server $j$ in $\mathcal{S}^{\prime}$ is larger than the corresponding in the actual network $\mathcal{S}$. We remark here that even though customers in the network $\mathcal{S}^{\prime}$ have the same service rate $\mu$ they are still identifiable by their class.

Following the uniformization technique \cite{Lip-OR-1975}, we will now obtain the discrete-time Markov chain queueing models
$\{X_{n}, n \geq 0\}$ and $\{Y_{n}, n \geq 0\}$ of the multiclass networks $\mathcal{S}$ and $\mathcal{S}^{\prime}$, respectively.
For the purposes of uniformization, let us choose two numbers $Q_{1}$ and $Q_{2}$ such that $Q_{1} > \lambda + J \max_{\alpha, j} \mu_{\alpha, j}$ and $Q_{2} > \lambda^{\prime} + J\mu$ and $\frac{\lambda}{Q_{1}} = \frac{\lambda^{\prime}}{Q_{2}}$ where $Q_{1}$ and $Q_{2}$ are the respective Poisson rates that achieve uniformization for the networks $\mathcal{S}$ and $\mathcal{S}^{\prime}$, respectively. Note that $Q_{1}$ and $Q_{2}$ are so chosen that at every time epoch the probability that a customer arrives is \emph{same} in both the chains.

With this construction we now have two $\{X_{n}\}$ and $\{Y_{n}\}$  that are the same except for the probability for  service completion at a given time epoch: \emph{all servers in the network $\mathcal{S}^{\prime}$ are slower than the corresponding servers in the network $\mathcal{S}$}.

Next we demonstrate a construction in which the Markov chains
$\{X_{n}, n \geq 0\}$ and $\{Y_{n}, n \geq 0\}$ will be defined on a common probability space and then we show that the total number of customers in the network $\mathcal{S}^{\prime}$ will always be at least as large as the total number of customers in the network $\mathcal{S}$ under the condition that both networks are initialized in the \emph{same state}. Let $\{A_{n}, n \geq 1\}$ be a sequence of i.i.d. random variables with the distribution $p(A_{n}=1) = \frac{\lambda}{T_{1}} = \frac{\lambda^{\prime}}{T_{2}} = 1- p(A_{n}=0)$ where  $A_{n}=1$ denotes an arrival at the $n$th epoch in \emph{both} the networks.
For $1 \leq j \leq J$, let $\{B_{j, k}, k \geq 1\}$ be a sequence of i.i.d. random variables with the distribution $p(B_{j, k} = l) = r_{j, l}$ where $B_{j, k}$ denotes the server to be chosen by the $k$th departure from the $j$th server in \emph{both} the networks. Also, by suitable construction we can generate departure events so that a when the $j$th server is non-empty in $\mathcal{S}$ as well as $\mathcal{S}^{\prime}$ then a departure from the $j$th server
in $\mathcal{S}^{\prime}$ implies a departure from the $j$th server in $\mathcal{S}$. An implication of this construction is that the departure epoch of $k$th customer from the $j$th server in $\mathcal{S}^{\prime}$ will be \emph{no sooner} than the corresponding departure in the network $\mathcal{S}$. Thus the total number of customers in the network $\mathcal{S}^{\prime}$ will always be at least as large as the total number of customers in the network $\mathcal{S}$. Thus stability of the network $\mathcal{S}^{\prime}$ implies stability of the actual network $\mathcal{S}$.

Following the arguments above, from now on we concentrate exclusively on the network with a single customer service rate.

\begin{theorem}
\label{th:my theorem}
The condition $\sum_{\alpha} \rho_{\alpha, j} < 1$ for $1 \leq j \leq J$ is a sufficient condition for positive recurrence of the Markov chain $\{X_{n}, n \geq 1\}$. \Q
\end{theorem}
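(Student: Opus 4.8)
The plan is to apply Theorem~\ref{th:My generalization of Rosberg's theorem} to the uniformized single-service-rate chain $\{Y_{n}\}$, exploiting a structural simplification: because every customer is served at the common rate $\mu$ and the routing is class-independent, the vector of queue-lengths $N(x) = (n_{1}(x),\ldots,n_{J}(x))$, with $n_{m}(x)$ the total number of customers of all classes at server $m$, evolves as an \emph{autonomous} Markov chain whose law does not depend on the scheduling policy. An arrival occurs at server $m$ with total rate $\lambda^{\prime}_{m}=\sum_{\alpha}\lambda^{\prime}q_{\alpha,m}$, and whenever server $i$ is busy a departure occurs at rate $\mu$ and is routed to $l$ with probability $r_{i,l}$, regardless of which customer the policy selects. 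Thus the counting process is exactly a Jackson network with loads $\rho^{\prime}_{j}=\Lambda^{\prime}_{j}/\mu=\sum_{\alpha}\rho^{\prime}_{\alpha,j}<1$. The chain $\{Y_{n}\}$ is irreducible and aperiodic, the latter being guaranteed by the uniformization self-loops.

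For each server $j$ I would take the Lyapunov function $V_{j}(x)=\sum_{m=1}^{J}\Gamma_{m}^{j}\,n_{m}(x)$, the expected number of future visits to server $j$ by the customers currently present; note that $\Gamma_{m}^{j}$ is class-independent by the visit recursion in~(\ref{eq:conservation equations}). Using $\Gamma_{i}^{j}=\delta_{i,j}+\sum_{l}r_{i,l}\Gamma_{l}^{j}$, a direct computation of the one-step drift gives
\begin{eqnarray*}
\Delta V_{j}(x) &=& \frac{1}{Q_{2}}\left( \sum_{m}\lambda^{\prime}_{m}\Gamma_{m}^{j} - \mu\,\mathbf{1}[n_{j}(x)>0] \right) \;=\; \frac{1}{Q_{2}}\left( \Lambda^{\prime}_{j} - \mu\,\mathbf{1}[n_{j}(x)>0]\right),
\end{eqnarray*}
the last step using the conservation relation $\sum_{m}\lambda^{\prime}_{m}\Gamma_{m}^{j}=\Lambda^{\prime}_{j}$ (total throughput at $j$ equals external input times expected visits). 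Setting $\mathcal{X}_{j}=\{x:n_{j}(x)=0\}$, this yields $\Delta V_{j}\le\eta_{j}:=\Lambda^{\prime}_{j}/Q_{2}$ everywhere and $\Delta V_{j}=-\epsilon_{j}:=-(\mu-\Lambda^{\prime}_{j})/Q_{2}<0$ on $\mathcal{X}_{j}^{c}$, establishing Assumption~\ref{assumption 1}. For Assumption~\ref{assumption 2} I would set $\mathcal{A}_{j,k}=\{x:n_{j}(x)<k\}$: since each transition changes $n_{j}$ by at most one, from $\mathcal{A}_{j,k}^{c}=\{n_{j}\ge k\}$ the chain cannot reach $\mathcal{X}_{j}$ in fewer than $k$ steps, giving (i), while $\cap_{j}\mathcal{A}_{j,k}$ caps the total population below $Jk$ and so is finite (the number of classes being finite), giving (ii).

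The crux is Assumption~\ref{assumption 3}, where the monotonicity of the Jackson counting process (Lemma~\ref{lemma:verification of the UUB condition}) enters. Writing the $k$-step drift as
\[
\frac{\Delta^{k}V_{j}(x)}{k} = \frac{\Lambda^{\prime}_{j}}{Q_{2}} - \frac{\mu}{Q_{2}}\,\frac{1}{k}\sum_{l=0}^{k-1} p_{x}\!\left(n_{j}(Y^{l})>0\right),
\]
I would couple the counting chain started from an arbitrary $x$ with the one started from the empty state $\emptyset$ on the common probability space of Section~\ref{sec:model}, driving both by the same arrival, routing, and departure events. A step-by-step check shows this coupling preserves the coordinatewise order $N(Y^{l})\ge N(Y^{l}_{\emptyset})$ for all $l$ — a source queue is decremented only when it is nonempty, and it is nonempty in the larger copy whenever it is in the smaller — so $p_{x}(n_{j}(Y^{l})>0)\ge p_{\emptyset}(n_{j}(Y^{l})>0)$, whence $\Delta^{k}V_{j}(x)/k\le\Delta^{k}V_{j}(\emptyset)/k$ for every $x$. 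Since the counting chain is a positive recurrent Jackson network with stationary busy probability $\rho^{\prime}_{j}=\Lambda^{\prime}_{j}/\mu$, the Cesàro average above converges to $\rho^{\prime}_{j}$ and hence $c_{j}^{*}=\lim_{k}\Delta^{k}V_{j}(\emptyset)/k=0$. Given $\delta>0$ I may then choose $\mathsf{K}$ (uniformly over the finitely many $j$) with $\Delta^{k}V_{j}(\emptyset)/k<\delta$ for $k\ge\mathsf{K}$, and the monotone domination upgrades this to the uniform bound $\Delta^{k}V_{j}(x)/k<c_{j}^{*}+\delta$ demanded by the UUB condition. Theorem~\ref{th:My generalization of Rosberg's theorem} then delivers positive recurrence of $\{Y_{n}\}$, and the sample-path domination established in Section~\ref{sec:model} transfers it to $\{X_{n}\}$. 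The main obstacle is precisely this monotonicity step: it is the single-service-rate reduction together with class-independent routing that renders the counting process policy-free and coordinatewise monotone, without which the empty-state domination — and thus the uniformity in the UUB condition — would fail.
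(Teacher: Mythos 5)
Your proposal is correct and follows the same skeleton as the paper's own proof: apply Theorem~\ref{th:My generalization of Rosberg's theorem} to the single-service-rate chain, with the same Lyapunov functions (your class-independent $\Gamma_{m}^{j}$ coincides with the paper's $\Gamma_{\alpha,m}^{j}$, which is already independent of $\alpha$ by the first equation in~(\ref{eq:conservation equations})), the same partition $\mathcal{X}_{j}=\{x: n_{j}(x)=0\}$, the same one-step drift identity $Q\,\Delta V_{j}(x)=\Lambda_{j}^{\prime}-\mu\mathbf{1}[n_{j}(x)>0]$, the same empty-state domination for the UUB condition (this is exactly Lemma~\ref{lemma:verification of the UUB condition}), and the same sample-path transfer from $\mathcal{S}^{\prime}$ back to $\mathcal{S}$. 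You do, however, go beyond the paper in three respects worth recording. (i) You verify Assumption~\ref{assumption 2} explicitly, taking $\mathcal{A}_{j,k}=\{x: n_{j}(x)<k\}$ and using the fact that one transition changes each $n_{j}$ by at most one; the paper's proof omits this verification entirely. (A caveat: finiteness of $\cap_{j}\mathcal{A}_{j,k}$ requires finitely many customer classes, since states carry class and positional labels; the paper's notation $\alpha=1,2,\ldots$ leaves this ambiguous, so it is a gap in the paper as much as in your argument.) (ii) You make explicit the structural fact that, under a common service rate and class-independent routing, the total-count vector is itself an autonomous Markov chain --- a uniformized Jackson network --- whose law is independent of the scheduling policy; the paper leaves this implicit inside the terse proof of Lemma~\ref{lemma:verification of the UUB condition}. (iii) You use Jackson theory to identify $c_{j}^{*}=0$, whereas the paper never computes $c_{j}^{*}$ (it does not need to: domination by the drift from the zero state suffices for UUB, as you also note). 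Observation (ii) is actually strong enough to short-circuit the whole argument: since the counting chain is autonomous and its zero state coincides with the empty state of the full chain, positive recurrence of the Jackson counting chain under $\rho_{j}^{\prime}<1$ already gives a finite mean return time to the empty state of $\{Y_{n}\}$, so Rosberg's criterion could be dispensed with altogether --- a simplification the paper does not exploit.
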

Before we prove Theorem \ref{th:my theorem}, we need to establish the following Lemma.

\begin{lemma}
\label{lemma:verification of the UUB condition}
Let $x \in \mathcal{X}$ be a non-zero state and $p_{x, \mathcal{X}_{j}}^{n} = \sum_{y \in \mathcal{X}_{j}} p_{x, y}^{n}$. Then $p_{x, \mathcal{X}_{j}}^{n} \leq p_{0, \mathcal{X}_{j}}^{n}$ under class-independent routing. \Q
\end{lemma}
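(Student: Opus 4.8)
The plan is to prove the inequality by a pathwise coupling argument: I would exhibit two copies of the single-rate chain on a common probability space, one started from the empty state $0$ and one started from the given non-zero state $x$, arranged so that the empty-started copy is dominated by the other at \emph{every} epoch, and then invoke the fact that the sets $\mathcal{X}_{j}$ entering the partition $\mathcal{P}_{j}$ are decreasing with respect to the domination order. Concretely, write $\{U_{n}\}$ for the chain with $U_{0}=0$ and $\{W_{n}\}$ for the chain with $W_{0}=x$, and for a state $y$ let $n_{i}(y)$ denote the number of customers present at server $i$. I would order states by $y \preceq z$ iff $n_{i}(y) \le n_{i}(z)$ for every $1 \le i \le J$, so that $0$ is the minimum of this order.

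\emph{Coupling.} I would reuse the construction of Section~\ref{sec:model}, driving both chains with the same randomness. At each uniformization epoch exactly one of the following is drawn: an external arrival, with its class/server assignment governed by the common probabilities $q_{\alpha,i}$, or a service-slot firing at one of the servers $i$ (each firing at rate $\mu$) together with a routing destination drawn from $r_{i,\cdot}$. A firing at server $i$ produces an actual departure in a given chain precisely when that chain currently has $n_{i}>0$, in which case one customer leaves $i$ and is routed to the drawn destination. The decisive point is that, because routing is class-independent, the destination depends only on the originating server $i$ and not on the class or position of the departing customer; hence the \emph{same} destination variable (the $B_{j,k}$ of Section~\ref{sec:model}, here attached to the firing epoch to avoid counter drift) can legitimately be shared by the two chains even though their class compositions at server $i$ may differ. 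This is exactly where class-independence is used. I note moreover that for the per-server counts the work-conserving policy is irrelevant: only whether a server is busy, not which customer it selects, affects the counts, so the coupled count dynamics are well defined for any stationary work-conserving rule.

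\emph{Invariant.} I would establish $U_{n} \preceq W_{n}$ for all $n$ by induction. The base case holds since $U_{0}=0 \preceq x = W_{0}$. For the step, assume $U_{n} \preceq W_{n}$ and inspect each event. An arrival increments the same server in both chains, preserving the order. A firing at server $i$ with both chains busy decrements $n_{i}$ and increments the common destination in both, preserving the order. A firing at $i$ with only $W_{n}$ busy cannot violate the order: $n_{i}(U)$ stays at $0$ while $n_{i}(W)$ decreases but remains $\ge 0 = n_{i}(U)$, and the destination count only increases for $W$, which is consistent with $U \preceq W$. The remaining case, only $U_{n}$ busy at $i$, is impossible, since $n_{i}(U_{n}) \ge 1$ together with $U_{n} \preceq W_{n}$ forces $n_{i}(W_{n}) \ge 1$. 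Hence the invariant propagates.

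\emph{Conclusion and main obstacle.} Since each $\mathcal{X}_{j}$ used in the stability analysis is determined by a condition on the per-server counts that is decreasing for $\preceq$ (membership is preserved under deleting customers), the invariant yields the almost-sure event inclusion $\{W_{n} \in \mathcal{X}_{j}\} \subseteq \{U_{n} \in \mathcal{X}_{j}\}$, and taking probabilities gives $p_{x,\mathcal{X}_{j}}^{n} \le p_{0,\mathcal{X}_{j}}^{n}$. The hardest part, I expect, is the careful bookkeeping of the coupling on the detailed positional state space: one must ensure the shared arrival, routing, and service-slot randomness keeps the two chains synchronized without the departure indexing drifting apart, and one must confirm that the specific sets $\mathcal{X}_{j}$ invoked later are genuinely decreasing in the chosen order. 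Both points rest squarely on class-independence of routing and on the fact that the per-server counts evolve independently of the scheduling choices.
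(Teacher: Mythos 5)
Your proposal is correct and takes essentially the same route as the paper: the paper likewise reduces the claim to pathwise domination of the per-server counts on a common probability space (invoking Strassen's theorem) and appeals to class-independent routing, though it leaves the actual coupling verification at ``it is not hard to see.'' Your explicit event-by-event induction, including the observation that the scheduling policy cannot affect the counts and that the routing variable must be attached to firing epochs rather than departure indices, supplies precisely the details the paper omits.
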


\begin{proof}
Let the notation $X^{x}_{n}$, $n \geq 1$, represent the Markov chain whose state at time $n=0$ is $x$. Let $y_{j}^{n}(x) = \sum_{\alpha} x_{\alpha, j}^{n}$ denote the total number of customers in the $j$th queue after $n$ transitions when the state at $n=0$ is $x$. Let now $x$ be a non-zero state and 0, the zero state. We need to show that $y_{j}^{n}(x)$ is \emph{stochastically larger} than $y_{j}^{n}(0)$; in notation $y_{j}^{n}(x) \geq_{\mbox{s.t.}} y_{j}^{n}(0)$. But this implies that $p_{x, \mathcal{X}_{j}}^{n} \leq
p_{0, \mathcal{X}_{j}}^{n}$ for $n \geq 1$.
Following Strassen's theorem \cite{BacBre-BOOK} it is enough to show the same on a \emph{common probability space} formed by the customer arrival process, random variables modeling service times of customers of different classes at different servers, and also the random variables that model routing decisions of customers at each of the $J$ servers.

Because of the assumption that customers routing decisions are independent of their class, it is not hard to see that $y_{j}^{n}(x) \geq_{\mbox{s.t.}} y_{j}^{n}(0)$ and hence
 $p_{x, \mathcal{X}_{j}}^{n} \leq p_{0, \mathcal{X}_{j}}^{n}$, $n \geq 1$ given that $x$ is a non-zero state. That is, the total number of customers in each queue in a non-zero state always dominate the corresponding quantities in a smaller state.
\end{proof}

{\bf Proof of Theorem \ref{th:my theorem}:}
By letting $x_{\alpha, k}$ denote the number of customers of the class $\alpha$ at the server $k$, we consider the Lyapunov functions $V_{j}(x)$, one for each server $j$, that were originally proposed in \cite{Ros-JRN-JAP-1980}.
\begin{eqnarray*}
V_{j}(x) &=& \sum_{\alpha} \sum_{k=1}^{J} x_{\alpha, k} \Gamma_{\alpha, k}^{j}
\end{eqnarray*}
The Lyapunov function $V_{j}(x)$ can be interpreted as the \emph{virtual work load} on the server $j$ in state $x$, whereas $x_{j}= \sum_{\alpha} x_{\alpha, j}$, the total number of customers who are presently in the queue $j$, denotes the \emph{physical work load} on the server $j$. We now define the sets $\mathcal{X}_{j} = \{x \in \mathcal{X}: x_{j} = 0\}$ and then $\mathcal{X}_{j}^{c} = \{x \in \mathcal{X}: x_{j} \geq 1\}$ so that $\left\{ \mathcal{X}_{j}, \mathcal{X}_{j}^{c} \right\}$ is a partition of the state space $\mathcal{X}$.
Define $Q = \lambda + J\mu$. Our next task is establish the drift
\begin{eqnarray*}
\Delta V_{j}(x) &=& \left\{
\begin{array}{ll}
\frac{1}{Q} \left( \sum_{\alpha} \Lambda_{\alpha, j} - \mu \right) & \mbox{if $x \in \mathcal{X}_{j}^{c}$} \\
\frac{1}{Q} \sum_{\alpha} \Lambda_{\alpha, j} & \mbox{if $x \in \mathcal{X}_{j}$}
\end{array}
\right.
\end{eqnarray*}
We identify the following events in the system that result in a state transition:
\begin{itemize}
\item[E1]   An exogenous customer arrival of class $\alpha$ to the server $j$ with probability $\frac{\lambda_{\alpha, j}}{Q}$,
\item[E2]  A departure from the $j$th server that moves to the $k$th server with probability $\frac{\mu r_{j, k}}{Q}$,
\item[E3] A departure from the $j$th server that exits the network with probability $\frac{\mu (1-\sum_{k=1}^{J} r_{j, k})}{Q}$ and,
\item[E4] A transition that takes the chain back to the same state with probability $1 - \frac{\lambda}{Q} - \sum_{j:x_{j} \neq 0} \frac{\mu}{Q}$.
\end{itemize}

Then we can derive the drift $\Delta V_{j}(x)$ as

\begin{eqnarray*}
Q \Delta V_{j}(x) &=& \underbrace{\sum_{\alpha} \sum_{k=1}^{J} \Gamma_{\alpha, k}^{j} \lambda q_{\alpha, k}}_{E1} +
\underbrace{\sum_{\alpha} \sum_{m, n} \left( \Gamma_{\alpha, n}^{j} - \Gamma_{\alpha, m}^{j} \right) \mu r_{m, n}}_{E2} +
\underbrace{\sum_{\alpha} \sum_{m} - \Gamma_{\alpha, m}^{j} \mu [1 -\sum_{k=1}^{J} r_{m, k}]}_{E3} \\
&\stackrel{(*)}{=}& \sum_{\alpha} \Lambda_{\alpha, j} - \mu \mathbb{I}\{x_{j} > 0\}
\end{eqnarray*}
where $(*)$ follows from the equations (\ref{eq:conservation equations}).

To verify the UUB condition (Assumption \ref{assumption 3} of Theorem \ref{th:My generalization of Rosberg's theorem}), we first note that
\begin{eqnarray*}
Q \frac{\Delta V_{j}^{n}(x)}{n} &=& \sum_{y \in \mathcal{X}_{j}} Q \Delta V_{j}(y)  \frac{1}{n} \sum_{k=0}^{n-1} p_{x, y}^{k} + \sum_{y \in \mathcal{X}_{j}^{c}} Q \Delta V_{j}(y) \frac{1}{n} \sum_{k=0}^{n-1} p_{x, y}^{k} \\
&=& \left( \sum_{\alpha} \Lambda_{\alpha, j} \right) \frac{1}{n}  \sum_{k=0}^{n-1} p_{x, \mathcal{X}_{j}}^{k} + \left( \sum_{\alpha} \Lambda_{\alpha, j} - \mu  \right) \frac{1}{n} \sum_{k=0}^{n-1} p_{x, \mathcal{X}_{j}^{c}}^{k}  \\
&=&  \sum_{\alpha} \Lambda_{\alpha, j} - \mu \frac{1}{n} \sum_{k=0}^{n-1} p_{x, \mathcal{X}_{j}^{c}}^{k}
\end{eqnarray*}

From Lemma \ref{lemma:verification of the UUB condition} this implies that $\frac{1}{n} \Delta V_{j}^{n}(x) \leq \frac{1}{n} \Delta V_{j}^{n}(0)$, thus proving the UUB condition.
\Q

\bibliographystyle{unsrt}
\bibliographystyle{apt}
\bibliography{info}

\begin{thebibliography}{10}

\bibitem{Ros-JRN-JAP-1980}
Z.~Rosberg.
\newblock ``{A} {P}ositive {R}ecurrence {C}riterion {A}ssociated with
  {M}ultidimensional {Q}ueueing {P}rocesses''.
\newblock {\em Journal of Applied Probability}, 17(3):790--801, 1980.

\bibitem{LuKum-JRN-ACTRAN}
S.~H. Lu and P.~R Kumar.
\newblock Distributed scheduling based on due dates and buffer priorities.
\newblock {\em IEEE Trans. Automat. Control}, 36:1406--1416, 1991.

\bibitem{KumSei-JRN-ACTRAN}
P.~R. Kumar and T.~I Seidman.
\newblock Dynamic instabilities and stabilization methods in distributed
  real-time scheduling of manufacturing systems.
\newblock {\em IEEE Trans. Automat. Control}, 35:289--298, 1990.

\bibitem{KumMey-JRN-JSAC-1995}
P.~R Kumar and S~Meyn.
\newblock Stability of {Q}ueueing {N}etworks and {S}cheduling {P}olicies.
\newblock {\em IEEE Trans. Automat. Control}, 40(2):251--260, February 1995.

\bibitem{RybSto-JRN-PRI}
A.~N. Rybko and A.~L Stolyar.
\newblock On the ergodicity of random processes that describe the functioning
  of open queueing networks.
\newblock {\em Problemy Peredachi Informatsii}, 28(3-26), 1992.

\bibitem{Bra-AAP-1994}
M~Bramson.
\newblock Instability of {FIFO} queueing networks.
\newblock {\em Ann. Appl. Probab}, 4:414--431, 1994.

\bibitem{Sei-JRN-ACTRAN}
T.~I Seidman.
\newblock First come, first served" can be unstable!
\newblock {\em IEEE Trans. Automat. Control}, 39, 1994.

\bibitem{Dai-AAP-1995}
J.~G. Dai.
\newblock On positive harris recurrence of multiclass queueing networks: A
  unified approach via fluid limit models.
\newblock {\em Ann. Appl. Probab.}, 5:49--77, 1995.

\bibitem{Sto-MPRF-1995}
A.~L Stolyar.
\newblock On the stability of multiclass queueing networks: A relaxed
  sufficient condition via limiting fluid processes.
\newblock {\em Markov Process. Related Fields}, 1:491--512, 1995.

\bibitem{Bra-QUESTA-1996}
M~Bramson.
\newblock Convergence to equilibria for fluid models of fifo queueing networks.
\newblock {\em Queueing Systems Theory Appl.}, 22:5--45, 1996.

\bibitem{FosRyb-MPRF-1996}
S.~Foss and A.~Rybko.
\newblock Stability of {M}ulticlass {J}ackson-{T}ype {N}etworks.
\newblock {\em Markov Processes and Related Fields}, 2(3):461--487, 1996.

\bibitem{Kelly-BOOK}
F.~P. Kelly.
\newblock {\em Reversibility and {S}tochastic {N}etworks}.
\newblock John Wiley \& Sons Ltd, 1979.

\bibitem{Foster_1953}
F.~G. Foster.
\newblock ``{O}n the stochastic matrices associated with certain queueing
  processes''.
\newblock {\em Ann. Math. Statist}, 24:355--360, 1953.

\bibitem{Lip-OR-1975}
S.A Lippman.
\newblock Applying a new devise in optimization of exponential queueing
  systems.
\newblock {\em Oper. Res.}, 23:687--710, 1975.

\bibitem{BacBre-BOOK}
F.~Baccelli and P.~Bremaud.
\newblock {\em {E}lements of {Q}ueueing {T}heory}, volume~26 of {\em
  Applications of Mathematics}.
\newblock Springer, second edition.

\end{thebibliography}

\end{document}